\newtheorem{thm}{Lemma}[section]
\begin{document}

\title{ Take up the challenge for a single field inflation after BICEP2}

\author{Chao-Jun Feng}
\email{fengcj@shnu.edu.cn} 
\affiliation{Shanghai United Center for Astrophysics (SUCA), \\ Shanghai Normal University,
    100 Guilin Road, Shanghai 200234, P.R.China}
    \affiliation{State Key Laboratory of Theoretical Physics, \\Institute of Theoretical Physics, Chinese Academy of Sciences, Beijing 100190, P.R.China}

\author{Xin-Zhou Li}
\email{kychz@shnu.edu.cn} \affiliation{Shanghai United Center for Astrophysics (SUCA),  \\ Shanghai Normal University,
    100 Guilin Road, Shanghai 200234, P.R.China}

\begin{abstract}
The measurement of the tensor-to-scalar ratio $r$ shows a very powerful  constraint to theoretical inflation models through the detection of B-mode. In this paper, we propose a single inflation model with infinity pows series potential called the Infinity Power Series (IPS) inflation  model, which is well consistent with  latest observations from \textit{Planck} and BICEP2. Furthermore,  it is found that in the IPS model, the absolute value the running of the spectral index  increases while the tensor-to-scalar ratio becomes large, namely both large $r \approx 0.20$ and $|n_s'|\gtrsim \mathcal{O}(10^{-2})$ are realized in the IPS model. In the meanwhile,  the number of e-folds is also big enough  $N\approx 35\sim45$ to solve the problems in Big Bang cosmology.
\end{abstract}

\pacs{98.80.Cq, 98.80.Es}
\maketitle


\section{Introduction}

The B-mode polarization that can only be generated by the tensor modes has been observed by BICEP2 group \cite{Ade:2014xna}. This is the first evidence for the primordial gravitational wave.  According to the results of BICEP2, the tensor-to-scalar ratio is constrained to $r = 0.20_{-0.05}^{+0.07}$ at  $68\%$ level for the lensed-$\Lambda$CDM model, with $r=0$ disfavoured at $7.0\sigma$ level.  With the help of this new constant on $r$ as well as that on the spectral index, some inflation models with prediction of negligible tensor perturbation have been excluded, such at the small-field inflation models.

There is a tension between the observational results from BICEP2 and the report of \textit{Planck}.
According to the recent  works from \textit{Planck} \cite{Ade:2013uln}, it gives $n_s=0.9600\pm0.0071$ and $r_{0.002}<0.11$ at $95\%$CL by combination of \textit{Planck}+WP+highL data.  However, when the running parameter defined as $n'_s \equiv dn_s/d\ln k$ is included in the data fitting, it gives $n_s=0.95700\pm0.0075$, $r_{0.002}<0.26$ and  $n'_s = -0.022_{-0.021}^{+0.020}$ at $95\%$CL by the same data.  In a word, to give a consistent constraint on $r$ for the combination of  \textit{Planck}+WP+highL  data and the BICEP2 data, we require a running of the spectral index $n'_s\lesssim -0.001$ at $95\%$CL.

On the other hand, inflation is the most economic way to solve the puzzles like the horizon problem in the Hot Big Bang cosmology. It also generates a small quantum fluctuation that could be considered as the seed of the large-scale structure in the later time. The detection of B-mode from CMB by the BICEP2 group also indicates a strong evidence of inflation \cite{Guth:1980zm,Linde:1981mu,Albrecht:1982wi}. In a simplest slow-roll inflation model, the early universe was driven by a single scalar field $\phi$ called the inflaton with a very flat potential $V(\phi)$. For example, in the so-called chaotic inflation \cite{Linde:1983gd}, we have $V\sim \phi^n$, while in the so-called natural inflation \cite{Freese:1990rb}, we have $V\sim V_0(\cos{\phi/f}+1)$. In these  kind of single inflation models, the spectral index $n_s$ for the scalar perturbation deviates from the Harrison-Zel'dovich value of unity in order of $n_s-1\sim \mathcal{O}(10^{-2})$, and its running in order of $n'_s\sim \mathcal{O}(10^{-4})$. Thus the explanation of large $r$ and $|n'_s|$ is a challenge to single scalar field, see ref.\cite{Gong:2014cqa} for much detail discussions on this challenge.

Also, in ref.\cite{Chung:2003iu} the authors pointed out that it is difficult to achieve in common realisations of slow-roll inflation as it requires a large third field derivative while maintaining small first and second derivatives. Usually, to give a stronger running of the spectral index, a small number of e-folds $N$ is needed, e.g. $N\approx23$ contributes a factor $1/23^2\sim0.002$ to $n'_s$ \cite{Chung:2003iu}, see also ref.\cite{Easther:2006tv} for the same conclusions. However, these discussion are all based on the assumption that the inflaton potential is expanded in the Taylor series of the inflaton field with finite truncation.  In this paper, in order to take up this challenge for the single inflation models, we propose a infinity power series potential for the inflaton, see Eq.(\ref{equ:potential}). We find that the absolute value ??the running of the spectral index is increasing while the tensor-to-scalar ratio becomes large. In the meanwhile,  the number of e-folds is also big enough to solve the problems in Big Bang cosmology, say $N\approx 35\sim45$. And also it is proved that this kind of infinity power series potential is convergent when the parameters are located at some physical regions in the parameter space. In  next section, we calculate the slow-roll parameters for this model and one can see that it indeed predict a large tensor-to-scalar ratio and the running of the spectral index, and we also confront this model with latest observations from \textit{Planck} and BICEP2. In last section, we will give some conclusions and discussions.  The convergence proof and some coefficients of the potential are presented in the Appendix.\ref{app:pro} and \ref{app:cof}. For related works to realize  a large running spectral index, see refs.\cite{Feng:2003mk,Kobayashi:2010pz}.

\section{Single field inflation with power series potential }

\subsection{Potential of Infinity Power Series (IPS) inflation  model.}
In this paper, we propose a single field inflation with the following power series potential in the units of $M_{pl}^{-2} = 8\pi G =1$:
\begin{equation}\label{equ:potential}
	V(\phi) = \alpha \sum_{n=1}^{\infty} a_n \phi^{2n} \,, 
\end{equation}
where $\alpha$ is a proportional constant, and we call it the Infinity Power Series (IPS) inflation  model. Here $a_n$ are nothing but some constant coefficients. Since $a_1$ could be any finite values, we absorb it into the parameter $\alpha$. While for $n\geq2$, it satisfies the following recurrence  relation:
\begin{equation}\label{equ:reca}
	a_n = -\frac{1 }{8Aa_1n(2n-1)(n-1)}  \sum_{k=1}^{n-1} a_{n-k}\bigg\{  k\bigg[ (6-B)k-2 -  (2-B) n \bigg] a_k + 8A  (k+1) (n-k)(2n-2k-1)(n-k-1)a_{k+1}  \bigg\}\,,
\end{equation}
where $A, B$ are two parameters in this model, which  can be determined by observations.   By using the above recurrence  relation (\ref{equ:reca}), one could get the value of $a_n$ for a given $n$. We also present the values of $a_n$ for some leading terms of the potential in the App.\ref{app:cof}, in which it shows that the value of $a_n$ is decreasing with $n$ increased for a given parameter $A$.  

Such a kind of power series potential may be derived from some supergravity (SUGRA) theories :
\begin{equation}
	V = e^K \bigg[ K^{i\bar j} (D_i W)(D_{\bar j}\bar W ) - 3|W|^2\bigg]\,, \quad D_iW = \partial_i W + (\partial_i K) W\,,
\end{equation}
with the  the following K\"ahler potential for example:
\begin{equation}\label{equ:khler}
	K = g_1(\varphi + \varphi^\dag) +  \frac{1}{2}(\varphi + \varphi^\dag)^2 + |X|^2 - c_X|X|^4 + \cdots \,, 
\end{equation}
which respects the shift symmetry to ensure the flatness of the potential beyond the Planck scale along the imaginary components of $\varphi$, $\phi\equiv \sqrt{2}\text{Im}\varphi$. Here $\phi$ is identified to be the inflaton and $X$ is another chiral superfield. The superpotential is assumed to be of the general form
\begin{equation}\label{equ:super}
	W = X\sum h_n\phi^{n} + W_0 \,.
\end{equation}
see refs.\cite{Nakayama:2013txa} for details on the SUGRA theory. The authors in refs.\cite{Nakayama:2013txa, Kobayashi:2014jga} have considered so-called the polynomial inflation models with limited terms in the potential.

Notice that many forms of potential can be  expanded as an infinity power series, e.g. an exponential potential $V\sim e^\phi$ can be  written as $V \sim \sum \phi^n/n!$. Even a general potential can be expanded as Taylor series. However, in the Taylor series expansion, one can not control the infinity number of coefficients. The only way to deal with them seems to be firstly expanding the potential to only a few orders near the vicinity of the inflaton field, and then reconstructing the potential  from observations, see refs.\cite{Ma:2014vua, Choudhury:2014kma} for examples. In the IPS inflation model, there are only three parameters in the potential (\ref{equ:potential}), namely, $\alpha, A$ and $B$.  Furthermore, in the limit of $|A|\rightarrow\infty$ or $B\rightarrow 0$, one can recover the quadratic chaotic inflation model,  since in this case $a_n \rightarrow 0$ for $n\geq2$ , namely $V=\alpha \phi^2$. 

\begin{figure}[h]
\begin{center}
\includegraphics[width=0.45\textwidth,angle=0]{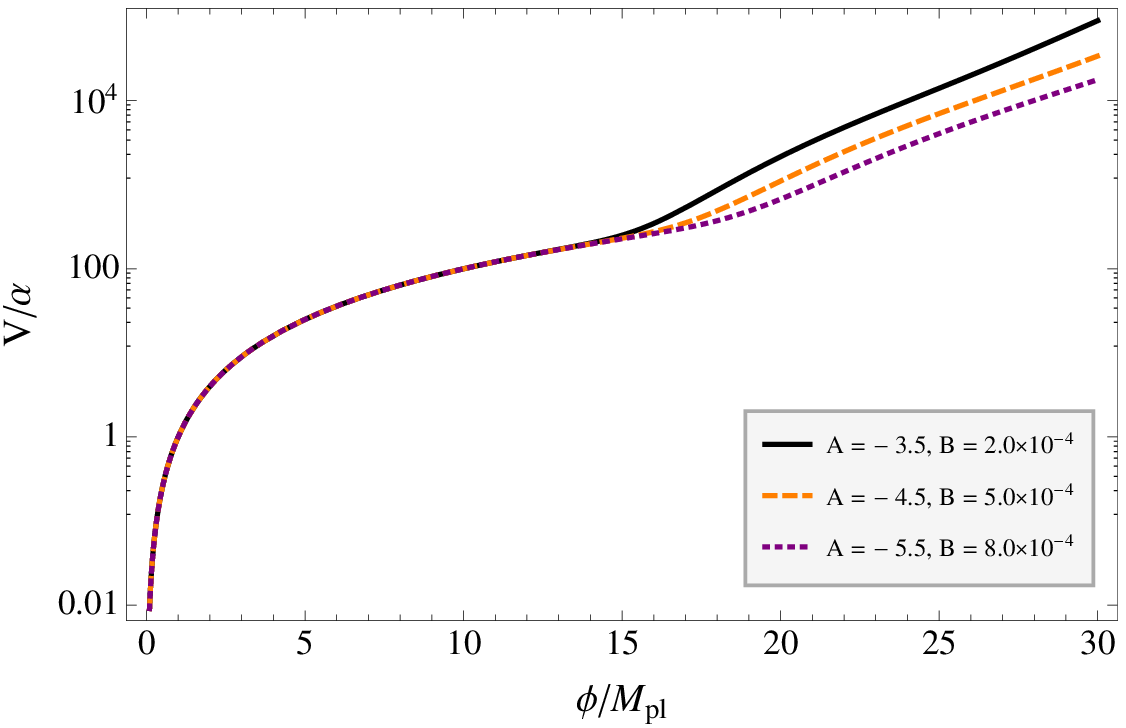}
\quad
\includegraphics[width=0.45\textwidth,angle=0]{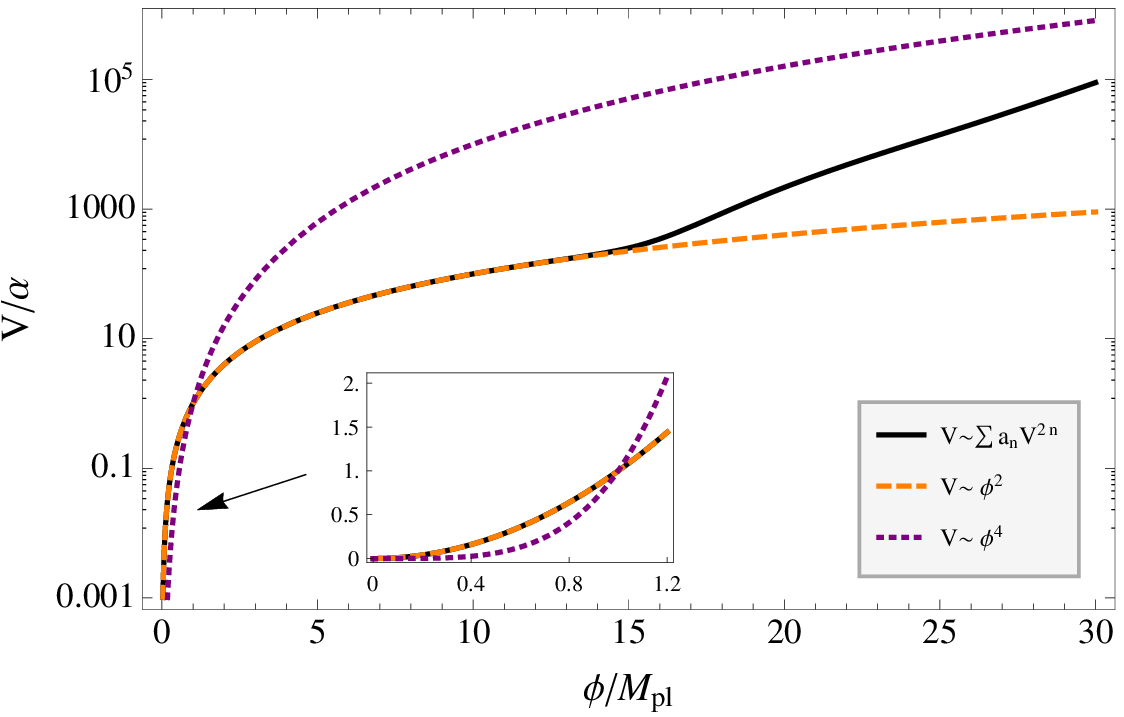}
\caption{\label{fig:pot} Left: The potential $V$ in Eq.(\ref{equ:potential}) with different parameters $A, B$. Right:  Compare to the chaotic inflation models with potentials $V\sim \phi^2, \phi^4$.     }
\end{center}
\end{figure}

From Fig.\ref{fig:pot}, one can see that the potential  is increasing with $\phi$ increased, and it gets a little bit more steep when the parameter $A$ becomes small. Compare this potential from Eq.(\ref{equ:potential}) with that in the chaotic inflation, e.g. $\phi^2$ or $\phi^4$, we find that it could be almost identical to the quadratic one when $\phi$ is small, while it becomes larger than $\phi^2$ when $\phi$ goes large, but it is smaller than the quartic potential. Furthermore,  one can see that this kind of infinity power series potential is convergent in the physical regions of parameter spaces, which is required for an successful inflation, see Appendix.\ref{app:pro}.

\subsection{Slow-roll parameters}

The slow-roll parameters are defined as  usual
\begin{equation}\label{equ:slowroll}
	\epsilon = \frac{M_{pl}^2}{2} \left( \frac{V' }{V} \right)^2 \,, \quad
	\eta = M_{pl}^2 \frac{V''}{V} \,, \quad
	\xi = M_{pl}^4 \frac{V'V'''}{V^2} \,,
\end{equation}
where  the prime denotes the derivatives with respect to $\phi$. Here we have recovered the unit $M_{pl}$ to indicate these parameters are all dimensionless.  By  substituting  the potential (\ref{equ:potential}) into these definitions, we obtain
\begin{equation}\label{equ:sl1}
	\epsilon = \frac{1}{2} \phi^{-2} \sum_{n=0}^\infty b_n \phi^{2n} \,,\quad 
	\eta = \phi^{-2} \sum_{n=0}^{\infty}  c_n \phi^{2n} \,,\quad
	\xi = \phi^{-2} \sum_{n=0}^{\infty}  d_n \phi^{2n}
\end{equation}
where $b_0 = 4, c_0 = 2, d_0 = -B/A $ and $b_n, c_n, d_n$ (for $n\geq 1$) satisfy the following recurrence  relations
\begin{eqnarray}
	b_n &=&  4(n+1)   a_{n+1}  + \sum_{k=1}^{n} \bigg[ 4(k+1)(n-k+1)a_{k+1}a_{n-k+1}  -   \sum_{m=1}^{k+1} a_m a_{k+2-m}  b_{n-k} \bigg] \,, \label{equ:recb}\\
	c_n &=&   2(n+1)(2n+1)  a_{n+1} - \sum_{k=1}^{n} a_{k+1} c_{n-k}    \label{equ:recc} \,,\\
\nonumber
	d_n &=&  8(n+2)(2n+3)(n+1)a_{n+2}  \\
	&+& \sum_{k=1}^{n}  \left [  8(k+1) (n-k+2)(2n-2k+3)(n-k+1) a_{k+1}a_{n-k+2}  - \sum_{m=1}^{k+1} a_ma_{k+2-m}  d_{n-k}  \right]\,.\label{equ:recd} 
\end{eqnarray}
Furthermore, by using the recurrence  relation of $a_{n}$ in Eq.(\ref{equ:reca}),  we find  the following relation between the coefficients of $b_n, c_n$ and $d_n$, see Appendix.\ref{app:cof} for details:
\begin{equation}\label{equ:bcd}
	(2-B)b_n - 4c_n = 4A~d_n \,,
\end{equation}
for $n=0, 1, \cdots$. Thus we obtain the relation between the slow-roll parameters 
\begin{equation}\label{equ:xia}
	\xi = \frac{1}{2A}\bigg[ (2-B)\epsilon - 2\eta\bigg]\,.
\end{equation}
Furthermore, the number of e-folds before the end of inflation is given by
\begin{equation}\label{equ:efold}
	N = \int_t^{t_{\text{end}}} H dt = -\int_{\phi}^{\phi_{\text{end}}} \frac{V}{V'} d\phi  = \frac{\phi^2}{4}\sum_{k=0}^{\infty} \frac{f_n}{n+1} (\phi^{2n} -\phi^{2n}_{\rm end} )\,,
\end{equation}
 with $f_0 =1$ and 
 \begin{equation}\label{equ:recf}
	f_n=  a_{n+1} - \sum_{k=1}^{n}(k+1)a_{k+1} f_{n-k} \,,
\end{equation}
for $n\geq1$. Here $\phi_{\rm end}$ denotes the value of the inflaton at the end of inflation, which is determined by $\epsilon\approx1$, or  by the following equation:
\begin{equation}\label{equ:phiend}
	2\phi^2_{\rm end} \approx \sum_{n=0}^\infty b_n \phi_{\rm end}^{2n} \,,
\end{equation}
which is around the Planck scale $\phi_{\rm end} /_{M_{pl}}\sim 1 $, and one can also numerical solve the above equation for $\phi_{\rm end}$.

\subsection{Power spectral index and tensor modes}
The power spectrum of the scalar perturbation is given by 
\begin{equation}\label{equ:ps}
	\mathcal{P}_s = A_s \left( \frac{k}{k_*}\right)^{n_s-1+ n_s'\ln(k/k_*)/2} \,,
\end{equation}
where $A_s$ is its amplitude, $n_s$ is called the spectral index, and $n_s' =  dn_s/d\ln k$ is the running of index. In a single field inflation model, the spectral indices and its running could be given in terms of the slow-roll parameters:
\begin{eqnarray}
	n_s - 1 &\approx&  2\eta - 6\epsilon \,,  \\
	n_s'  &\approx& 16\epsilon\eta-24\epsilon^2 -2\xi = 16\epsilon\eta-24\epsilon^2 -A^{-1}\bigg[ (2-B)\epsilon - 2\eta\bigg]\,,
\end{eqnarray}
where we have used Eq.(\ref{equ:xia}).  While the tensor-to-scalar ratio $r$ is simply given by $r\approx 16\epsilon$.  Here, one can see the parameter $A, B$ is very important to get a large running of spectral index.  From Fig.\ref{fig:nrns}, the  running of index is in order of $n'_s\sim \mathcal{O}(\epsilon) \sim  \mathcal{O}(10^{-2}) $, while the tensor ratio is $r \approx 0.20$. Therefore, both large $r$ and $|n_s'|$ are realized in the IPS model. The corresponding number of e-folds is $N\approx35\sim45$ for $n_s \approx 0.96$, which seems enough big to solve the problems in the Big Bang cosmology.

\begin{figure}[h]
\begin{center}
\includegraphics[width=0.43\textwidth,angle=0]{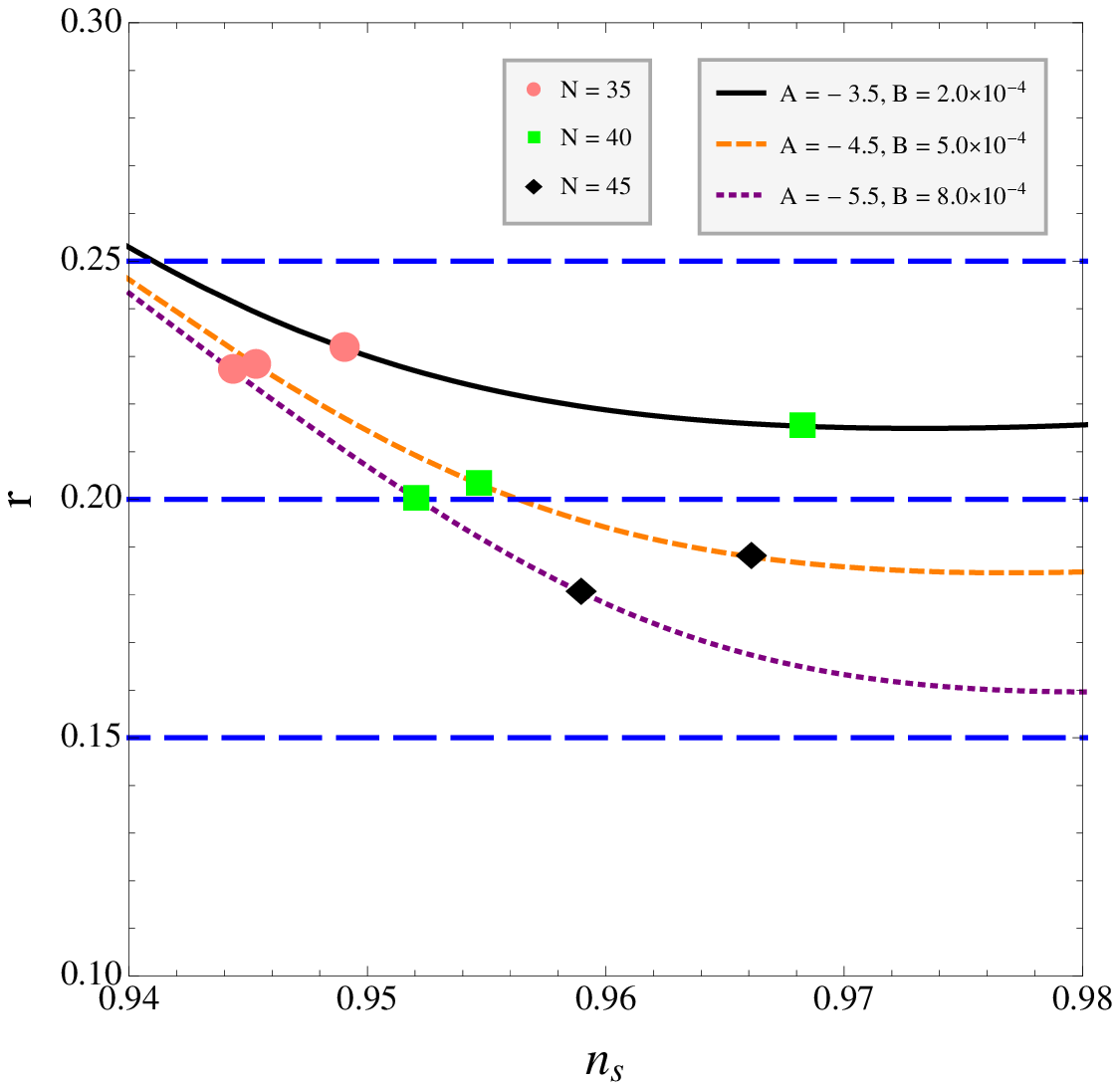}
\quad
\includegraphics[width=0.45\textwidth,angle=0]{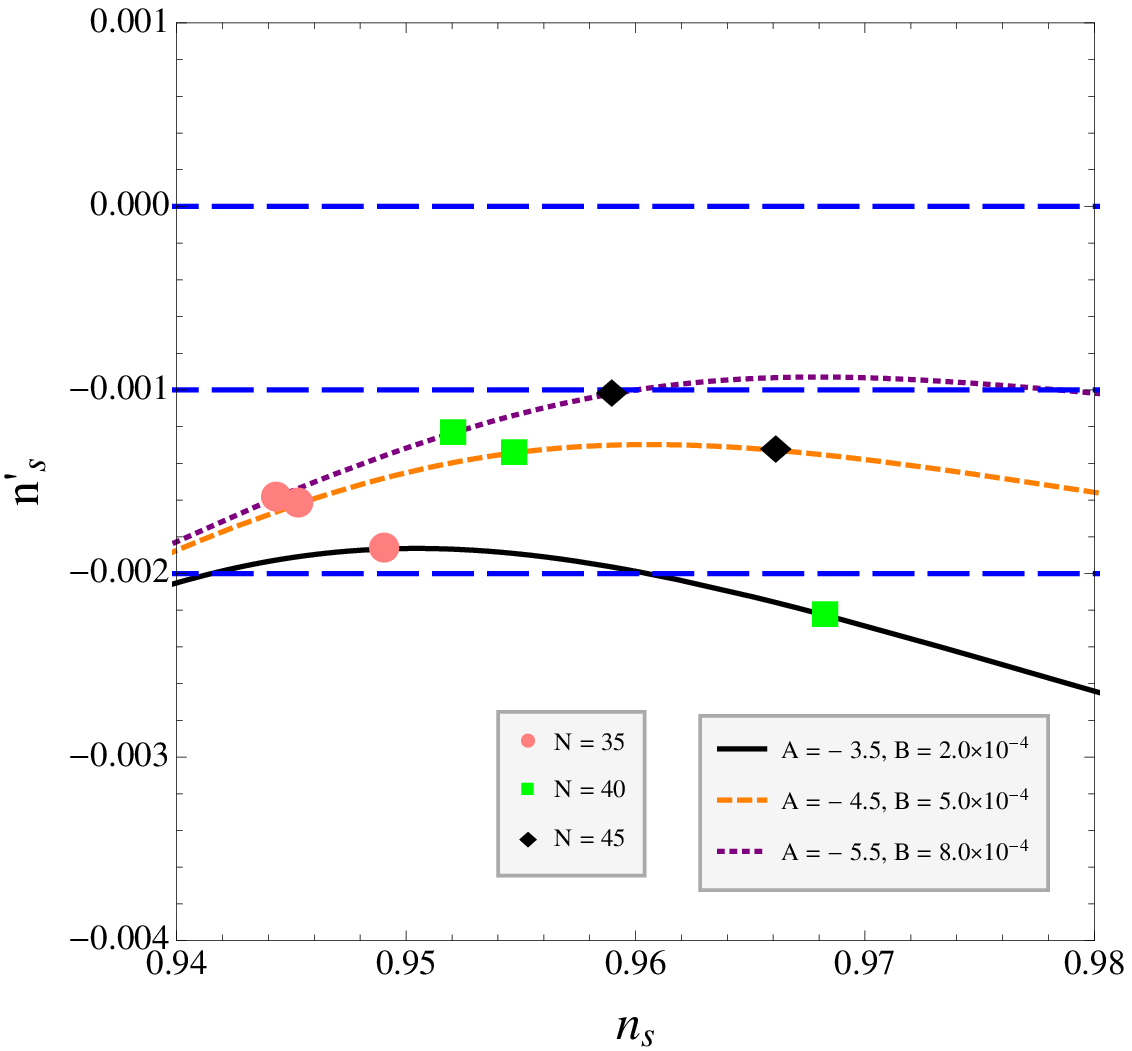}
\caption{\label{fig:nrns} Left: The tensor-to-scalar ratio $r \,\,\,  v.s. $ the spectral index $n_s$ with parameters $A, B$. Right:  The running of the spectral index $n'_s \,\,\, v.s.$ the spectral index $n_s$ with the same parameters.}
\end{center}
\end{figure}

The latest analysis of the data including the $Planck$ CMB temperature data, the WMAP large scale polarization data  (WP) , CMB data extending the \textit{Planck} data to higher-$l$,  the \textit{Planck} lensing power spectrum, and BAO data  gives the constraint on the index $n_s$ of the scalar power spectrum\cite{Ade:2013uln}: $ 0.9583\pm0.0081$(\textit{Planck}+ WP), $ 0.9633\pm0.0072$ (\textit{Planck}+WP+lensing), $ 0.9570\pm0.0075$ (\textit{Planck}+WP+highL), $ 0.9607\pm0.0063$ (\textit{Planck} +WP+BAO).  It also gives an upper bound on $r\lesssim 0.25$.  The BICEP2 experiment constraints the tensor-scalar-ratio as: $r=0.20^{+0.07}_{-0.05}$ in ref.\cite{Ade:2014xna}. They are also other groups have reported their constrain results on the ratio: $r=0.23^{+0.05}_{-0.09}$ in ref.\cite{Cheng:2014ota} by adopting the Background Imaging of Cosmic Extragalactic Polarization (B2), \textit{Planck} and WP data sets;  $r=0.20^{+0.04}_{-0.05}$  in ref.\cite{Cheng:2014cja} combined with the Supernova Legacy Survey (SNLS); $r=0.199^{+0.037}_{-0.044}$ in ref.\cite{Li:2014cka}  by adopting the \textit{Planck}, supernova \textit{Union2.1} compilation, BAO and BICEP2 data sets; and also $r=0.20^{+0.04}_{-0.06}$ in ref.\cite{Wu:2014qxa} with other BAO data sets. This  B-mode signal can not be mimicked by topological defects\cite{Lizarraga:2014eaa}, and also can not be explained in large extra- dimension models \cite{Ho:2014xza}. The most likely origin of this signal is from the tensor perturbations or the  gravitational wave polarizations during inflation.  And also, the cosmological constant seems important during inflation to predict a large tensor mode\cite{Feng:2014naa}.  

By using these results,  we obtain the constraints on the parameters $A$ and $B$, see Fig.\ref{fig:fit}. And the value of inflaton field during inflation is typically $\phi /M_{pl}\approx \mathcal{O}(10)$, which corresponds to a large field inflation.  By using the amplitude value of the power spectrum from \textit{Planck}, $A_s (k=0.002 Mpc^{-1}) = 2.215\times 10^{-19}$ \cite{Ade:2013zuv}, we can also estimate the value of the $\alpha$ parameter:
\begin{equation}\label{equ:estalpha}
	\alpha \approx \frac{3\pi^2 }{2 }\left (\sum_{n=1}^\infty a_n\phi^{2n}\right)^{-1} r A_s  \approx 4.37\times 10^{-11} \,,
\end{equation}
with the best fitting values of $A$ and $B$.

\begin{figure}[h]
\begin{center}
\includegraphics[width=0.43\textwidth,angle=0]{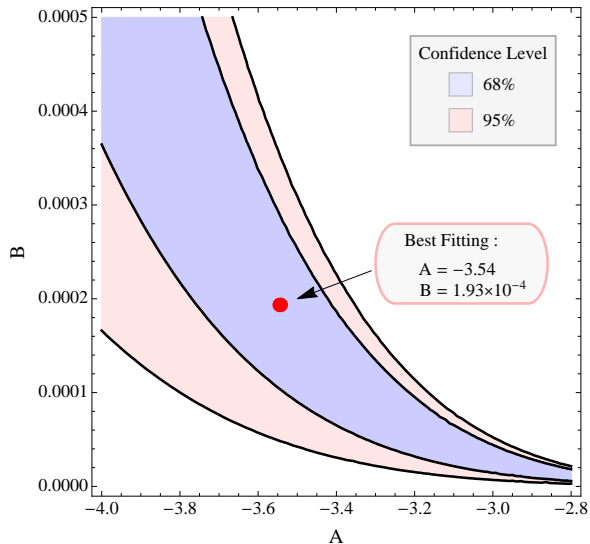}
\caption{\label{fig:fit}  Constraints on the values of parameters $A$ and $B$ with contours from $1\sigma$ to $2\sigma$ confidence levels.}
\end{center}
\end{figure}

\section{Conclusion}

In this paper, we have proposed the Infinity Power Series (IPS) inflation  model with the inflaton potential like Eq.(\ref{equ:potential}).  By confronting this model with latest observations from \textit{Planck} and BICEP2, we constrain the parameters $A$ and $B$. It is found that the absolute value of the running of the spectral index increases with the tensor-to-scalar ratio becomes large,  while the number of e-folds is around $35\sim45$, which seems enough to solve the problems in the Big Bang cosmology, see Fig.\ref{fig:nrns}.  In a word, the IPS model could predict both a large tensor ratio $r\approx0.2$ and a large running of spectral index $|n'_s | \gtrsim \mathcal{O}(10^{-2})$ as well as a large enough number of e-folds, and it also predict the spectral index around $n_s\approx 0.96$. 

The convergence test has been performed to the inflaton potential in some physical regions of the parameter space. There could be other regions that the potential could be convergent. Furthermore, this kind of potentials could also come from some SUGRA theories, and it deserves further study.

\acknowledgments

This work is supported by National Science Foundation of China grant Nos.~11105091 and~11047138, ``Chen Guang" project supported by Shanghai Municipal Education Commission and Shanghai Education Development Foundation Grant No. 12CG51, National Education Foundation of China grant  No.~2009312711004, Shanghai Natural Science Foundation, China grant No.~10ZR1422000, Key Project of Chinese Ministry of Education grant, No.~211059,  and  Shanghai Special Education Foundation, No.~ssd10004, and the Program of Shanghai Normal University (DXL124).

\appendix 
\section{Convergence of the inflaton potential}\label{app:pro}
In this section, we will prove that the  inflaton potential with infinite power series like Eq.(\ref{equ:potential}) is convergent absolutely in some regions  of the parameter space $\{A, B\}$. And within this convergent region, one could get an successful inflation with consistent predictions as described in this paper.  First of all, we will prove that the following theorem
\begin{thm}
By given $(-A)  \approx  10^p, $and $(-B/A) \approx 10^{-2q} $ with $ 0<p<2q$, then the following equation
\begin{equation}\label{equ:prov1}
	0 \leq a_n < \frac{a_1B}{8n(2n-1)(n-1)}  \left(\frac{B}{-A}\right)^{(n-2)/\beta} \,,
\end{equation}
 is always valid  for $n\geq 2$ as long as $\beta \gtrsim 2q/(q+1)$.  
\end{thm}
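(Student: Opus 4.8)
The natural approach is strong induction on $n$. Throughout I take $a_1>0$ (harmless, since $a_1$ is absorbed into $\alpha$) and work in the stated regime: $A<0$, $B>0$, and $\lambda:=B/(-A)=-B/A\approx 10^{-2q}$ small, while $-A\approx 10^{p}>1$. Abbreviate the claimed upper bound by $\tilde a_n:=\frac{a_1 B}{8n(2n-1)(n-1)}\,\lambda^{(n-2)/\beta}$. The base case is a one-line computation: for $n=2$ the sum in (\ref{equ:reca}) reduces to its $k=1$ term, where the factor $(n-k-1)$ vanishes and $(6-B)k-2-(2-B)n=B$, so
\[
a_2=\frac{a_1 B}{48(-A)}=\frac{\tilde a_2}{-A},
\]
whence $0<a_2<\tilde a_2$. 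If the general estimate below only becomes effective for $n$ past some fixed $n_0$, the finitely many cases $2\le n\le n_0$ are handled in the same direct fashion from (\ref{equ:reca}).

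For the inductive step, fix $n$ and assume $0\le a_m<\tilde a_m$ for all $2\le m\le n-1$. The recurrence prefactor $-\frac{1}{8Aa_1 n(2n-1)(n-1)}$ is positive, so it suffices to show $0\le S_n< 8(-A)\,a_1\,n(2n-1)(n-1)\,\tilde a_n$, where $S_n$ is the sum on the right of (\ref{equ:reca}). Writing $8A=-8(-A)$, decompose $S_n=P_n-8(-A)\,Q_n$ with $P_n=\sum_{k}k\big[(6-B)k-2-(2-B)n\big]a_{n-k}a_k$ and $Q_n=\sum_{k}(k+1)(n-k)(2n-2k-1)(n-k-1)a_{n-k}a_{k+1}\ge 0$; note the $k=n-1$ summand of $Q_n$ vanishes, so $a_n$ itself never appears on the right. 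The upper bound is then a matter of substituting $a_m<\tilde a_m$ for $m\ge2$ (keeping $a_1$ in the boundary terms $m=1$) and doing an exponent count: merging two $\tilde a$'s yields a factor $\lambda^{(n-4)/\beta}$ in $P_n$ and $\lambda^{(n-3)/\beta}$ in $Q_n$, each short of the target power $\lambda^{(n-2)/\beta}$ by a bounded power of $\lambda^{-1/\beta}$; tracking these against the residual explicit factors — a $B^2$ from the two $\tilde a$'s, the $(-A)$ multiplying $Q_n$ (which cancels the $(-A)^{-1}$ in the prefactor), and the combinatorial polynomials in $n$ divided by the cubic denominator $n(2n-1)(n-1)$ — collapses the whole estimate to a single requirement: a power of $10$ of the schematic form $10^{(p-2q)+c\,q/\beta}$, with $c>0$ a constant and a further correction for the $\mathcal{O}(10)$-sized powers of $n$ that survive the summation, must be $\lesssim 1$. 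Solving for $\beta$ is what produces the stated threshold $\beta\gtrsim 2q/(q+1)$.

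It remains to check $S_n\ge 0$, i.e.\ $P_n\ge 8(-A)\,Q_n$. The mechanism is that $P_n$ is dominated by its strictly positive $k=n-1$ term $a_1(n-1)\big[4(n-2)+B\big]a_{n-1}$, whereas $8(-A)\,Q_n$, despite the large prefactor $(-A)$, carries extra powers of the small $B$ (one from $a_2\sim B/(-A)$ at its dominant end, and one more relative to that $k=n-1$ term of $P_n$); using the geometric decay of the $a_m$ from the inductive hypothesis to bound the two sums, the $B$-smallness defeats $(-A)$ and $S_n\ge0$, closing the induction. Finally, (\ref{equ:prov1}) gives $\sum_{n}a_n\phi^{2n}\le a_1 B\sum_{n}\lambda^{(n-2)/\beta}\phi^{2n}/\big[8n(2n-1)(n-1)\big]$, which converges absolutely for $\phi^2<\lambda^{-1/\beta}=\big((-A)/B\big)^{1/\beta}$, so the IPS potential (\ref{equ:potential}) has a positive radius of convergence in this part of parameter space.

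I expect two genuine obstacles. The first is the sign control: establishing $P_n\ge 8(-A)\,Q_n$ \emph{uniformly in $n$} is delicate, because the combinatorial coefficients in $Q_n$ grow with $n$ and one must show precisely that the $B$-factors and the geometric decay outrun them; should the individual pieces of $S_n$ turn out to be too large, one would have to exploit the cancellation built into the recurrence — it encodes the differential constraint underlying (\ref{equ:xia}) — rather than bounding term by term. The second is pinning the constant: checking that the polynomial-in-$n$ factors are genuinely swallowed by the cubic denominator, with no leftover power of $n$, so that the threshold is exactly $2q/(q+1)$ and not something larger, is where most of the computational labour sits.
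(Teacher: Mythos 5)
Your proposal follows essentially the same route as the paper: strong induction on the recurrence (\ref{equ:reca}), with the base case checked directly from $a_2=-a_1B/(48A)$, the inductive hypothesis substituted into the two blocks of the sum (your $P_n$ and $-8(-A)Q_n$ split is exactly the paper's grouping), and the exponent count in $B/(-A)$ reduced to a single smallness condition that yields $\beta\gtrsim 2q/(q+1)$. The two ``obstacles'' you defer are precisely where the paper does its work: the upper bound is carried out by writing the substituted sum as an explicit function $F(A,B,n)$ and bounding it (via digamma-function estimates of the harmonic-type sums) by $10^{-2[(q+1)\beta-2q]/\beta}$ plus a $10^{-(2q-p)}$ correction, each $<1$ under the stated hypotheses; the nonnegativity $a_n\ge 0$, which you rightly flag as delicate, is in fact not argued any more carefully in the paper than in your sketch. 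So the proposal is a faithful, correctly structured outline of the paper's argument rather than a different proof.
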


\begin{proof}
By using the recurrence relation (\ref{equ:reca}), we get 
\begin{equation}\label{equ:a23}
	a_2 = -\frac{B}{48A} a_1\,, \quad a_3 = \frac{B (B+3)}{5760 A^2} a_1\,,
\end{equation}
and both of them satisfy Eq.(\ref{equ:prov1})  when $A<-1, B\geq0$ and $-B/A\ll1$
\begin{equation}
	a_2 < \frac{B}{48} a_1\,, \quad a_3 < \frac{a_1}{240} B \left(\frac{B}{-A}\right)^{1/\beta}
\end{equation}
If we assume Eq.(\ref{equ:prov1}) is satisfied for $ 3<m\leq n$, then from Eq.(\ref{equ:reca}) we obtain
\begin{eqnarray}
\nonumber
	 a_{n+1} &=&-\frac{1 }{8Aa_1n(n+1)(2n+1)} \bigg\{  \sum_{k=1}^{n-2}  (n-k)\bigg[ (6-B)(n-k)-2 -  (2-B) (n+1) \bigg] a_{k+1} a_{n-k}\\
\nonumber
	&+& 8A\sum_{k=1}^{n-1}   (k+1) (n-k+1)(2n-2k+1)(n-k)a_{k+1}  a_{n-k+1} + 2 \left(B n+2 n^2-3 n+1\right) a_n a_{1}\bigg\}\\
	&<&\frac{ a_1B }{8(n+1)(2n+1)n} \left(\frac{B}{-A}\right)^{(n-1)/\beta}  F(A,B,n) \,,
\end{eqnarray}
where the function $F$ is given by
\begin{eqnarray}
\nonumber
 && F(A,B,n) =  \frac{1 }{8}\bigg\{
	\sum_{k=1}^{n-2} \left(\frac{B}{-A}\right)^{1-2/\beta} \bigg[  \frac{B }{8k(2k+1)(2n-2k-1)(n-k-1)}  
	-   \frac{ 1}{4(k+1)(2k+1)(2n-2k-1)(n-k-1)} \\
\nonumber
	&+&   \frac{1}{2k(k+1)(2k+1)(2n-2k-1)} \bigg] 
	+ \left(\frac{B}{-A}\right)^{1-1/\beta}  A\sum_{k=1}^{n-1}  \bigg[ \frac{1}{k}  - \frac{2}{2k+1}  \bigg]
	- \frac{1}{A} \left(\frac{B}{-A}\right)^{-1/\beta} \frac{2}{n} \frac{ \left(2 n^2-3 n+1 + Bn\right)}{(2n^2-3n+1)}  \bigg\} \\
	&\lesssim&  \frac{2n-5}{2 n-3}10^{-2\left[(q+1)\beta-2q\right]/\beta} 
	- \frac{1}{8}10^{-(2q-p)} \bigg(\psi ^{(0)}(n)-\psi ^{(0)}\left(n+1/2\right)+\gamma +\psi ^{(0)}\left(3/2\right)\bigg)< 1\,.
\end{eqnarray}
Thus,  $a_{n+1}$ also satisfies Eq.(\ref{equ:prov1}). 

\end{proof}

By using this lemma, we perform the Cauchy convergence test on the power series of the potential in Eq.(\ref{equ:potential})
\begin{equation}
	\lim_{n\rightarrow \infty}\left( a_n\phi^{2n}\right)^{1/n} < \lim_{n\rightarrow \infty} n^{-3/n} \left(\frac{B}{-A}\right)^{(n-2)/\beta n}\phi^2 \approx 10^{-(q+1)} \frac{\phi^2 }{M_{pl}^2}< 1\,, \quad \text{for} \quad \frac{\phi^2 }{M_{pl}^2}<10^{q+1}\,.
\end{equation}
In this paper, we obtain the parameters $(-B/A) \approx 10^{-5}\sim10^{-6}$ ( or $q\approx2\sim3$),  and the value of the inflation field during inflation is about $\phi/M_{pl} \lesssim \mathcal{O}(10^2)$, so the potential is convergent absolutely .

\section{Details of the  coefficients of the potential}\label{app:cof}
By using the multiplication rules of power series \cite{grad},  
\begin{equation}\label{equ:mulrule}
	\sum_{k=0}^{\infty} a_k x^k   \sum_{k=0}^{\infty} b_k x^k = \sum_{k=0}^{\infty} c_k x^k   \,, \quad c_n = \sum_{k=0}^{n} a_k b_{n-k}\,,       
\end{equation}
one can obtain the following relations between the coefficients of $a_n, b_n, c_n$ and $d_n$ from Eqs.(\ref{equ:reca}), (\ref{equ:recb}) -(\ref{equ:recd})
\begin{eqnarray}
	\sum_{n=1}^{\infty}  2n a_n \phi^{2n} \sum_{n=1}^{\infty}  2n a_n \phi^{2n} &=&   \sum_{n=1}^{\infty}  a_n \phi^{2n} \sum_{n=1}^{\infty}  a_n \phi^{2n}\sum_{n=0}^\infty b_n \phi^{2n} \,, \label{equ:use1}\\
	\sum_{n=1}^{\infty} 2n(2n-1)a_n \phi^{2n}  &=&  \sum_{n=1}^{\infty} a_n \phi^{2n} \sum_{n=0}^{\infty} c_n \phi^{2n} \,, \label{equ:use2}\\
	\sum_{n=1}^{\infty} 2n a_n \phi^{2n}  \sum_{n=1}^{\infty} 4n(n+1)(2n+1) a_{n+1} \phi^{2n}  &=&  \sum_{n=1}^{\infty} a_n \phi^{2n} \sum_{n=1}^{\infty} a_n \phi^{2n}  \sum_{n=0}^{\infty} d_n \phi^{2n} \,.\label{equ:use3}
\end{eqnarray}
Multiply  Eq.(\ref{equ:use2}) by $ 4\sum_{n=1}^{\infty} a_n \phi^{2n}$, Eq.(\ref{equ:use3}) by $4A$,  then subtract the summation of them from Eq.(\ref{equ:use1}) multiplied by $(2-B)$, we finally obtain
\begin{eqnarray}
\nonumber
	&& \sum_{n=1}^{\infty}  a_n \phi^{2n} \sum_{n=1}^{\infty}  a_n \phi^{2n}\sum_{n=0}^\infty \bigg[ (2-B)b_n - 4c_n - 4A d_n \bigg] \phi^{2n}   \\
	 &=& \sum_{n=1}^{\infty}  2n a_n \phi^{2n} \sum_{n=1}^{\infty}   \bigg[ [  2(2-B)n+4 ] a_n 
	 -16An(n+1)(2n+1) a_{n+1}  \bigg] \phi^{2n}
	 - 16\sum_{n=1}^{\infty} a_n \phi^{2n} \sum_{n=1}^{\infty}n^2a_n \phi^{2n}   \,.
\end{eqnarray}
Again, by using Eq.(\ref{equ:mulrule}) and the expression of $a_n$ from Eq.(\ref{equ:reca}), one can see that the right hand side of the above equation is indeed vanished. Thus, we obtain the relation (\ref{equ:bcd}).

Leading terms of the coefficients of $a_n, b_n, c_n, d_n$ and $f_n$ with $B=0$ from Eqs.(\ref{equ:reca}), (\ref{equ:recb})-(\ref{equ:recd}) and (\ref{equ:recf}). Here $a_1$ is set to be unity, and $A$ is a parameter in the potential. In the limit of $A\rightarrow \infty$, the only non vanished values of the coefficients are $a_1 =1, b_0=4$ and $c_0=2$, which are just the same as the results in the chaotic inflation with a quadratic potential . The relation $(4-B)b_c - 4c_n = 4A d_n$ is  satisfied for all the values of $n=0,1,\cdots $.

\begin{eqnarray}
a_1 &=& +1 \,,\\
a_2 &=& -\frac{B}{48 A}\,,\\
a_3 &=&+\frac{B (B+3)}{5760 A^2}\,,\\
a_4 &=&-\frac{B \left(B^2+3 B+20\right)}{1290240 A^3}\,,\\
a_5 &=&+\frac{B \left(2 B^3+27 B^2-177 B+420\right)}{928972800 A^4}\,,\\
a_6 &=&-\frac{B \left(2 B^4-93 B^3+2319 B^2-8868 B+6048\right)}{490497638400 A^5}\,,\\
a_7 &=&+\frac{B \left(2 B^5+1293 B^4-33530 B^3+212991 B^2-353652 B+110880\right)}{357082280755200 A^6}\,,\\
a_8 &=&-\frac{B \left(2 B^6-20547 B^5+665142 B^4-5886141 B^3+16998072 B^2-14707248 B+2471040\right)}{342798989524992000 A^7}\,,\\
\nonumber
\cdots & &
\end{eqnarray}

\begin{eqnarray}
b_0 &=& 4\,, \\
b_1 &=&-\frac{B}{6 A}\,, \\
b_2 &=&\frac{B (B+8)}{960 A^2}\,, \\
b_3 &=&\frac{B \left(B^2+3 B-36\right)}{96768 A^3}\,, \\
b_4 &=&\frac{B \left(B^3-14 B^2-96 B+160\right)}{11059200 A^4}\,, \\
b_5 &=&\frac{B \left(B^4-52 B^3-12 B^2+1000 B-672\right)}{1362493440 A^5}\,, \\
b_6 &=&\frac{B \left(1382 B^5-178062 B^4+1122219 B^3+2861064 B^2-10455120 B+3628800\right)}{243465191424000 A^6}\,, \\
b_7 &=&\frac{B \left(4 B^6-1133 B^5+16482 B^4-24005 B^3-126888 B^2+190224 B-38016\right)}{94175546572800 A^7}\,, \\
\nonumber
\cdots & &
\end{eqnarray}

\begin{eqnarray}
c_0 &=& 2 \,,\\
c_1 &=&-\frac{5 B}{24 A}\,,\\
c_2 &=&\frac{B (B+28)}{1920 A^2}\,,\\
c_3 &=&\frac{B \left(2 B^2+111 B-324\right)}{387072 A^3}\,,\\
c_4 &=&\frac{B \left(B^3+116 B^2-996 B+880\right)}{22118400 A^4}\,,\\
c_5 &=&\frac{B \left(2 B^4+545 B^3-8472 B^2+21448 B-8736\right)}{5449973760 A^5}\,,\\
c_6 &=&\frac{B \left(1382 B^5+987648 B^4-22112811 B^3+104548104 B^2-123935760 B+27216000\right)}{486930382848000 A^6}\,,\\
c_7 &=&\frac{B \left(8 B^6+16231 B^5-466176 B^4+3372995 B^3-7622088 B^2+5066640 B-646272\right)}{376702186291200 A^7}\,,\\
\nonumber
\cdots & &
\end{eqnarray}

\begin{eqnarray}
d_0 &=& -\frac{B}{A} \,,\\
d_1 &=&\frac{B (B+3)}{24 A^2}\,,\\
d_2 &=&-\frac{B \left(B^2+8 B+40\right)}{3840 A^3}\,,\\
d_3 &=&-\frac{B \left(B^3+3 B^2+69 B-252\right)}{387072 A^4}\,,\\
d_4 &=&-\frac{B \left(B^4-14 B^3+164 B^2-1640 B+1440\right)}{44236800 A^5}\,,\\
d_5 &=&-\frac{B \left(B^5-52 B^4+637 B^3-7448 B^2+18776 B-7392\right)}{5449973760 A^6}\,,\\
d_6 &=&-\frac{B \left(1382 B^6-178062 B^5+3453639 B^4-43608996 B^3+192918960 B^2-223332480 B+47174400\right)}{973860765696000 A^7}\,,\\
d_7 &=&\frac{B \left(-4 B^7+1133 B^6-34979 B^5+523145 B^4-3294117 B^3+7178088 B^2-4648176 B+570240\right)}{376702186291200 A^8}\,,\\
\nonumber
\cdots & &
\end{eqnarray}

\begin{eqnarray}
f_0 &=& 1 \,,\\
f_1 &=&\frac{B}{48 A}\,,\\
f_2 &=&\frac{(B-2) B}{1920 A^2}\,,\\
f_3 &=&\frac{B \left(17 B^2-89 B+60\right)}{1290240 A^3}\,,\\
f_4 &=&\frac{B \left(31 B^3-285 B^2+522 B-168\right)}{92897280 A^4}\,,\\
f_5 &=&\frac{B \left(691 B^4-9389 B^3+31374 B^2-28620 B+5040\right)}{81749606400 A^5}\,,\\
f_6 &=&\frac{B \left(5461 B^5-99733 B^4+509592 B^3-874296 B^2+462072 B-47520\right)}{25505877196800 A^6}\,,\\
\nonumber
f_7 &=&\frac{B \left(1859138 B^6-43007343 B^5+304039386 B^4-817449429 B^3+838021080 B^2-278999280 B+17297280\right)}{342798989524992000 A^7}\,,\\
\cdots & &
\end{eqnarray}

\end{document}